\tikzstyle{vertex} = [fill,draw,circle,minimum size=3,inner sep=0pt]
\algrenewcommand\algorithmicrequire{\textbf{Input: }}
\algrenewcommand\algorithmicensure{\textbf{Output: }}
\newtheorem{theorem}{Theorem}               
\theoremstyle{definition}		                
\newtheorem{definition}{Definition}         
\newcommand{\cptp}{\operatorname{CPTP}}     
\DeclareMathOperator{\Tr}{Tr}               
\DeclareMathOperator{\Pa}{Pa}               
\DeclareMathOperator{\Ch}{Ch}               
\newcommand{\arc}{\!\!\rightarrowtail\!\!}  
\begin{document}
\title{Flow of dynamical causal structures with an application to correlations}
\author{Ämin Baumeler}
	\affiliation{Facoltà di scienze informatiche, Università della Svizzera italiana, 6900 Lugano, Switzerland}
	\affiliation{Facoltà indipendente di Gandria, 6978 Gandria, Switzerland}
\author{Stefan Wolf}
	\affiliation{Facoltà di scienze informatiche, Università della Svizzera italiana, 6900 Lugano, Switzerland}
	\affiliation{Facoltà indipendente di Gandria, 6978 Gandria, Switzerland}

\begin{abstract}
  \noindent
  Causal models capture cause-effect relations both qualitatively---via the graphical {\em causal structure---}and quantitatively---via the {\em model parameters.}
  They offer a~powerful framework for analyzing and constructing processes.
  Here, we introduce a~tool---{\em the flow of causal structures}---to visualize and explore the {\em dynamical aspect\/} of classical-deterministic processes, arguably like those present in general relativity.
  The flow describes all possible ways in which the causal structure of a~process can evolve.
  We also present an~algorithm to construct its supergraph---{\em the~superflow---}from the {\em causal structure only,} without invoking the model parameters.
  As an~application, we show that if all leaves of a~flow are trivial, then the corresponding process produces causal correlations only, i.e., correlations where future data cannot influence past events.
  This strengthens the result that processes, where every directed cycle in their causal structure is chordless, establish causal correlations only.~We~also~discuss~the~main~difficulties~for~the~quantum~generalization~of~the~present~algorithms.
\end{abstract}

\maketitle

\section{Introduction}
A~central barrier for the reconciliation of quantum theory with general relativity, as evocatively pointed out by Hardy~\cite{hardy}, consists in their diametrically opposed concepts: Quantum theory features {\em indefinite quantities\/} that are restricted to evolve along a {\em static causal structure,} general relativity contains only {\em definite\/} quantities, yet the causal structure is {\em dynamic.}
To overcome this barrier, recently much effort has been put into the study of frameworks~\cite{colnaghi,ocb,causalboxes} where the causal connections themselves are governed by quantum theory.
The ``process-matrix framework''~\cite{ocb}, which builds upon the assumption that no deviation from quantum theory is detectable locally, is the most permissive one.
It features the ``quantum switch''~\cite{quantumswitch}, a~prime example of a~process with {\em indefinite causal order.}
In the ``quantum switch,'' a~qubit controls the order in which two quantum gates are applied to a~target system.
That causal order may thus be in an~indefinite state.
This process conceptually exemplifies the {\em general-relativistic trait of dynamical causal order,} as in principle attainable via the displacement of matter, and, simultaneously, the {\em quantum-mechanical trait of superposing these possibilities.}

Causal connections are conveniently expressed and made explicit using {\em causal models\/}~\cite{pearl}, a~prime tool used in statistics, e.g., for causal reasoning:
{\em How may some empirically observed statistics be explained?}
A~causal model consists of two parts:
The causal structure, which is a~directed graph {\em (digraph)} ~$D=(\mathcal V(D),\mathcal E(D)\subseteq\mathcal V(D)^2)$ that encodes cause-effect relations among a~set of variables~$\mathcal V(D)$ via the edges~$\mathcal E(D)$, and the model parameters, which functionally express the respective dependencies (see Fig.~\ref{subfig:causalmodel}).
\begin{figure}
  \subfloat[\label{subfig:causalmodel}]{%
    \begin{tikzpicture}
      \node (A) at (0,0) {$X$};
      \node (B) at (-.5,1) {$Y$};
      \node (C) at (+.5,1) {$Z$};
      \draw[-stealth] (A) -- (B);
      \draw[-stealth] (B) -- (C);
      \draw[-stealth] (A) -- (C);
      \node (param) at (0,+2) {$\{P_X,P_{Y\mid X},P_{Z\mid X,Y}\}$};
    \end{tikzpicture}
  }\hspace{5em}
  \subfloat[\label{subfig:qswitch}]{%
    \begin{tikzpicture}
      \node (P) at (0,0) {$\mathbf P$};
      \node (A) at (-.5,1) {$\mathbf A$};
      \node (B) at (+.5,1) {$\mathbf B$};
      \node (F) at (0,2) {$\mathbf F$};
      \draw[-stealth] (P) -- (A);
      \draw[-stealth] (P) -- (B);
      \draw[-stealth] (A) -- (F);
      \draw[-stealth] (B) -- (F);
      \draw[-stealth] (A) to[out=45,in=180-45] (B);
      \draw[-stealth] (B) to[out=180+45,in=360-45] (A);
      \draw[-stealth] (P) to[out=22.5,in=360-22.5] (F);
    \end{tikzpicture}
  }
  \caption{%
    (a) Causal models: The model parameters with the directed graph constitute a~causal model for the correlations~$P_{X,Y,Z}.$
    (b) Causal structure of the quantum switch:
    The control qubit specified in the global past ($\mathbf P$) influences the global future ($\mathbf F$). Simultaneously, the target system may evolve through~$\mathbf A$ and then through~$\mathbf B$, or vice versa.
    This potentiality of transmitting a~signal from~$\mathbf A$ to~$\mathbf B$ or from~$\mathbf B$ to~$\mathbf A$ is reflected by the directed cycle in the causal structure.
    We use bold letters to indicate that the vertices are split-nodes.
  }
  \label{fig:causalmodels}
\end{figure}
Only recently---also for the sake of aligning Reichenbach's principle~\cite{reichenbach} with Bell correlations~\cite{bell}---causal models where generalized to quantum theory~\cite{fritz,costashrapnel,jmallen,qcm}.
The pioneering work by Costa and Shrapnel~\cite{costashrapnel} is a~first attempt to relate quantum causal models with processes arising from the ``process-matrix framework.''
The latest description of quantum causal models by Barrett, Lorenz, and Oreshkov~\cite{qcm}, which is sufficiently expressive to represent all unitarily extendible processes~\cite{araujo2017}, proved most-useful towards that aim.
Quantum causal models~\cite{qcm} are operational (they inherently allow for interventions) by virtue of following the ``split-node'' approach~\cite{splitnode}:
Each vertex~$v$ of the causal structure represents an~agent with two spaces, an~input space~$\mathcal I_v$, and an~output space~$\mathcal O_v$,
and the agent may perform an~intervention by specifying the state in the output space.
In this operational abstraction, {\em dynamical causal order means that the causal relations are affected by the agent's interventions.}
The description of Barrett, Lorenz, and Oreshkov~\cite{qcm} led to a~series of insights.
For instance, with our collaborator Tselentis~\cite{admissibility}, we show that the admissible quantum causal structures are ``siblings-on-cycles'' graphs; all other graph lead to inconsistencies.
These ``siblings-on-cycles'' graphs are such that every directed cycle therein contains no less than one pair of siblings.\footnote{%
  Two vertices~$u,v$ in a~directed graph~$D$ are {\em siblings,} whenever they share at least one parent, i.e., whenever~$\Pa_D(u)\cap\Pa_D(v)\neq\emptyset$, for~$\Pa_D(\ell):=\{k\mid k\arc\ell\in\mathcal E(D)\}$.
}

\subsection{Problem}
While causal models were extended to incorporate quantum theory, the {\em dynamical\/} aspect of causal relations is only poorly captured.
This dynamical aspect manifests itself only in the presence of directed cycles in the causal structure (see Fig.~\ref{subfig:qswitch}).
These cycles, however, solely reflect the {\em potential\/} transmission of information.
Actually, information traveling along a~cycle {\em must\/} be interrupted before completing the cycle~\cite{admissibility}.
In short, the directed cycles are {\em misleading;} they actually represent only the union of possible causal connections, and no ``causal loop,'' where an~event influences itself~\cite{antinomy}.
This problem is {\em not\/} inherent to quantum theory, but is also present in {\em classical-deterministic\/} causal models.
For example, the ``classical'' version of the quantum switch (where all qubits are replaced by bits) has the same causal structure (Fig.~\ref{subfig:qswitch}).

In the present treatment, we focus on this classical-deterministic case.
This restriction, aside from making the task simpler (the quantum case is at present intractable), is compensated by the following two benefits.
First, it allows us to focus on the dynamical aspect without the risk of having quantum effects appear implicitly.
Second, by studying this restricted case, we effectively study an ``abstraction'' of general relativity.
Note that general relativity is a classical-deterministic theory.
Also, the spacetime structure in general relativity depends on the distribution of matter.
An alternation of that distribution thus induces a change on the causal relations.
Although it is uncommon to describe such alternations in general relativity---after all, the state of affairs are fully determined by the boundary conditions---,
there seems to be no conceptual obstacle in doing so.
Therefore, the ``classical switch'' and any classical-deterministic causal model are, in principle, compatible with general relativity.
In this context, classical-deterministic causal models could be considered as a toy framework of general relativity.

\subsection{Results}
We develop a~tool to capture the {\em evolution of the causal structure in classical-deterministic causal models.}
This tool ``unravels'' the causal structure via the sequential implementation of interventions at source vertices, i.e., parentless vertices.
The resulting {\em flow graph\footnote{%
  A~flow graph is a {\em rooted directed graph:} For every vertex~$v$ there exists a~directed path from the root~$r$ to~$v$.}}
(or {\em flow\/} for short) encodes the evolution of the causal structure, and effectively captures the dynamics induced by the agents' interventions.
This clarifies the nature of the directed cycles in causal models with dynamical causal order.
We also show how to generate a {\em superflow,} an~``outer approximation,'' {\em from the causal structure only, i.e., without invoking the model parameters.}
The superflow is thus a purely qualitative object (independent of the dimensions etc.).
It therefore enables statements of qualitative nature, e.g., the characterization of possible correlations (see below).
This abstraction, however, comes with a price.
For a~given causal model, the superflow may describe additional, potentially unrealizable, evolutions of the causal structure: It may be a supergraph of the flow.
As an~example, Fig.~\ref{fig:flowgraph} shows the flow and the superflow of some causal model.
A specific choice of model parameters, which we present later (Eq.~\eqref{eq:examplemodparam}), results in a tree-like flow (without dashed part).
There, independent of $\mathbf P$'s intervention, the agents~$\mathbf A$ and~$\mathbf B$ are causally connected.
Logically, however, model parameters for the same initial causal structure may exist which produce a flow that comprises the dashed part:
A specific intervention of~$\mathbf P$, which depends on the model parameters, may causally disconnect~$\mathbf A$ and~$\mathbf B$.
\begin{figure}
  \centering
  \begin{tikzpicture}
    \node (P) at (0,0) {$\mathbf P$};
    \node (A) at (-.5,1) {$\mathbf A$};
    \node (B) at (+.5,1) {$\mathbf B$};
    \draw[-stealth] (P) -- (A);
    \draw[-stealth] (P) -- (B);
    \draw[-stealth] (A) to[out=45,in=180-45] (B);
    \draw[-stealth] (B) to[out=180+45,in=360-45] (A);
    \draw (-.75,-.25) rectangle (.75,1.5);
    \def\y{-1}
    \foreach \xoffset/\l in {-2/L,2/R} {
      \node (A\l) at (-.5+\xoffset,\y) {$\mathbf A$};
      \node (B\l) at (+.5+\xoffset,\y) {$\mathbf B$};
      \draw (-.75+\xoffset,-.25+\y) rectangle (.75+\xoffset,.25+\y);
    }
    \draw[-stealth] (AR) -- (BR);
    \draw[-stealth] (BL) -- (AL);
    \node (AC) at (-.5,\y) {$\mathbf A$};
    \node (BC) at (+.5,\y) {$\mathbf B$};
    \draw[dashed] (-.75,-.25+\y) rectangle (.75,.25+\y);
    \def\y{-2}
    \foreach \x/\l in {-1/A,1/B} {
      \node (\l) at (\x,\y) {$\mathbf \l$};
      \draw (-.25+\x,-.25+\y) rectangle (.25+\x,.25+\y);
    }
    \draw[-latex,thick,dashed] (0,-.25) -- (0,-.75);
    \draw[-latex,thick] (0,-.25) -- (-1.5,-.75);
    \draw[-latex,thick] (0,-.25) -- (+1.5,-.75);
    \draw[-latex,thick,dashed] (0,-1.25) -- (-1,-1.75);
    \draw[-latex,thick,dashed] (0,-1.25) -- (+1,-1.75);
    \draw[-latex,thick] (-1.5,-1.25) -- (-1,-1.75);
    \draw[-latex,thick] (+1.5,-1.25) -- (+1,-1.75);
  \end{tikzpicture}
  \caption{%
    Example of a~flow for the model parameters given in Eq.~\eqref{eq:examplemodparam} (without dashed part),
    and the superflow computed by Algorithm~\ref{alg:agnostic} (including dashed part).
  }
  \label{fig:flowgraph}
\end{figure}
The cause-effect relations among the agents for {\em any classical-deterministic causal model with that causal structure\/} are restricted to follow one of the paths in the flow.
The flow of this example has two leaf nodes.
A {\em leaf node,} or {\em leaf,} for short, is a~node without outgoig edges.
Here, both leaves hold trivial graphs, i.e., graphs that are composed by a single vertex.
Consequently, we call such nodes {\em trivial leaves.}
The above observation has stark consequences.
We show that if all leaves of the flow are trivial, then the attainable correlations from the initial causal model decompose causally:
Past data may influence future observations only.
For more general causal structures, this is not necessarily the case~\cite{ocb,baumeler2014,simplest,baumeler2022,admissibility}, and the correlations may be incompatible with any definite causal order.
Clearly, that property also holds for superflows.
With this, we strengthen a~result of Ref.~\cite{admissibility}.
There, we show that the strictly stronger condition of having chordless cycles only is sufficient.
A {\em chord\/} in a~directed cycle~$(v_0\arc\dots\arc v_{\ell-1}\arc v_0)$ is an~edge~$v_i\arc v_j$ with~\mbox{$j\neq i+1\pmod\ell$}.
Finally, we discuss the main challenges in ``unraveling'' quantum causal structures.

\section{Causal models}
In the scenario with the agents~\mbox{${\mathcal V(D)}$},
the respective input and output spaces~$\mathcal I_v$,~$\mathcal O_v$ of each agent~\mbox{$v\in{\mathcal V(D)}$} are finite sets.
An~intervention of agent~$v$ is a~function~\mbox{$\mu_v:\mathcal X_v\times\mathcal I_v\rightarrow\mathcal A_v\times\mathcal O_v$}, where~$\mathcal X_v$ is the set of settings, and~$\mathcal A_v$ the set of results.
We use~$\mathcal M_v$ for the set of all interventions of agent~$v$.
The notation we adopt is as follows.
By~$X_{\mathcal S}$ we denote the natural composition of the objects~$\{X_s\}_{s\in\mathcal S}$, and by~$\underline{X}$ the special case for~$\mathcal S={\mathcal V(D)}$.
The expression~$[\alpha=\beta]$ is one whenever the equality holds, and zero otherwise.
For a~function~$f:\mathcal Y\rightarrow\mathcal Z$, we write~$p^f$ for the induced conditional probability distribution defined through~$p^f(z|y):=[f(y)=z]$.
\begin{definition}[Causal model, consistency, and faithfulness~\cite{qcm,admissibility}]\label{def:causalmodel}
  An~$n$-agent classical-deterministic causal model consists of a~pair~$(D,\mathcal F)$,
  where~$D$ is a~directed graph with~\mbox{$|\mathcal V(D)|=n$} (causal structure),
  and where~$\mathcal F$ is a~family of functions~\mbox{$\{\omega_v:\mathcal O_{\Pa_D(v)}\rightarrow\mathcal I_{v}\}_{v\in{\mathcal V(D)}}$} (model parameters).
  The model is called {\em faithful\/} whenever, for each~$v\in{\mathcal V(D)}$, the function~$\omega_v$ is signaling from each~$\ell\in\Pa_D(v)$ to~$v$:
  \begin{equation}
    \begin{split}
      \forall v\!\in\!\mathcal V(D),\ell\!\in\!\Pa_D(v),\,
      &\exists \hat o\!\in\! \mathcal O_{\Pa_D(v)\setminus\{\ell\}}, (q,r)\!\in\!\mathcal O_\ell^2:
      \\
      \omega_v(q,\hat o) &\neq \omega_v(r,\hat o)
      \,.
    \end{split}
    \label{eq:faithful}
  \end{equation}
  The model is called {\em consistent\/} whenever the contraction of~$\underline{\omega}$ with any choice of interventions~$\underline{\mu}$ is a~function from~$\underline{\mathcal X}$ to~$\underline{\mathcal A}$:
  \begin{equation}
    \begin{split}
      \forall &\underline{\mu}\in\underline{\mathcal M}\,,\exists g\in\{\underline{\mathcal X}\rightarrow\underline{\mathcal A}\}\,,\forall (\underline x,\underline a)\in\underline{\mathcal X}\times\underline{\mathcal A}:\\
      &
      \sum_{(\underline{i},\underline{o})\in\underline{\mathcal I}\times\underline{\mathcal O}}
      [\underline{\omega}(\underline{o}) = \underline{i}]
      [\underline{\mu}(\underline x,\underline i) = (\underline a, \underline o)]
      =
      p^g(\underline a|\underline x)
      \,.
    \end{split}
    \label{eq:consistent}
  \end{equation}
  If not otherwise specified, we use the shorthand term ``causal model'' to refer to a {\em classical-deterministic causal model.}
\end{definition}

A {\em reduction\/} over a~source agent~$s$ is the partial contraction of the causal model with the intervention~$\mu_s$, and where every other intervention is left unspecified.
In the process of the reduction, the source~$s$ also fixes the setting~$x_s$ and records the result~$a_s$.
Here, we will only consider reductions over source agents.

We will make use of the following two central results.
The first states that whenever we reduce over a~source, then the reduced causal model remains consistent.
This result is essentially an~application of Lemma~A.3 of Ref.~\cite{ctc}.
We give a~tailored and simplified proof below.
\begin{theorem}[Reduction\label{thm:reduction}]
  If~$(D,\mathcal F)$ is a~consistent causal model,
  and~$s$ a~source, i.e.,~$\Pa_D(s)=\emptyset$,
  then for any~$o_s\in\mathcal O_s$,
  the reduced causal model~$(D',\mathcal F')$ with
  \begin{align}
    D'
    \!&:=\!
    (\mathcal V(D)\setminus\{s\},\mathcal E(D)\setminus\{ v\arc s,s\arc v\mid v\in\mathcal V(D)\})
    \,,
    \label{eq:subgraph}
    \\
    \mathcal F'
    \!&:=\!
    \{\omega_v\}_{v\in\mathcal V(D')\setminus\Ch_{D}(s)}
    \cup
    \{\omega_v(o_s,\cdots)\}_{v\in\Ch_{D}(s)}
    \,,
    \label{eq:reducedmodparam}
  \end{align}
  where~$\Ch_D(s):=\{v\mid s\arc v\in\mathcal E(D)\}$ are the children of~$s$ in~$D$, and~$\omega_v(o_s,\cdots)$ is the partial application of the function~$\omega_v$,
  is consistent yet possibly unfaithful.
\end{theorem}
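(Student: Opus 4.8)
The plan is to reduce the consistency of $(D',\mathcal F')$ to the consistency of the original model. By Eq.~\eqref{eq:consistent}, consistency is equivalent to the following purely combinatorial statement: for every intervention profile and every choice of settings, the fixed-point equations $\underline i=\underline\omega(\underline o)$ and $(\underline a,\underline o)=\underline\mu(\underline x,\underline i)$ admit exactly one solution $(\underline a,\underline i,\underline o)$. I will show that the reduced system inherits this unique solvability from the original one.

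First I would fix an arbitrary intervention profile $\underline\mu'=\{\mu_v\}_{v\in\mathcal V(D')}$ for the reduced model and extend it to an intervention profile for $(D,\mathcal F)$ by adjoining a single intervention $\mu_s$ at the source. Since $s$ is a source, $\Pa_D(s)=\emptyset$ and $\omega_s$ is a constant, delivering a fixed input to $s$; I define $\mu_s$ to emit the chosen output $o_s$ on that input while fixing some setting $x_s$ and recording the resulting $a_s$, as prescribed by the reduction. Concretely, $\mu_s(x,i):=(a_s,o_s)$ for a fixed $a_s\in\mathcal A_s$ is a legitimate intervention.

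Next I would compare the two fixed-point systems. With $\mu_s$ chosen as above, the equations attached to $s$ in the original system are solved uniquely and trivially: the input to $s$ is the constant $\omega_s$, and $\mu_s$ forces the output of $s$ to be exactly $o_s$. Substituting this value into the remaining equations, the input equation of every child $v\in\Ch_D(s)$ becomes $i_v=\omega_v(o_s,\cdots)$ evaluated on the outputs of the other parents---precisely the reduced parameter appearing in $\mathcal F'$ of Eq.~\eqref{eq:reducedmodparam}---while every non-child $v\in\mathcal V(D')\setminus\Ch_D(s)$ retains its original $\omega_v$, since $s\notin\Pa_D(v)$. Hence, after discarding the trivially satisfied $s$-coordinates, the original system collapses exactly onto the fixed-point system of $(D',\mathcal F')$ under $\underline\mu'$.

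Finally, this substitution establishes a bijection between solutions of the reduced system and solutions of the original system with $\mu_s$ fixed: one direction prepends the forced values $(a_s,o_s)$, the other drops them. Consistency of $(D,\mathcal F)$ guarantees a unique solution on the original side, so the reduced system has a unique solution as well; as $\underline\mu'$ and the settings were arbitrary, $(D',\mathcal F')$ is consistent. The step requiring care is precisely this solution-preserving correspondence---verifying that the $s$-equations are uniquely and trivially satisfied and that fixing $o_s$ reproduces $\mathcal F'$ on the nose---rather than any hard estimate. I would make no claim about faithfulness: fixing $o_s$ can render some $\omega_v(o_s,\cdots)$ insensitive to another of its parents, which is exactly why the reduced model may cease to be faithful.
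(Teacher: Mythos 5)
Your proof is correct and follows essentially the same route as the paper's: since $s$ is parentless its input is a constant, fixing $\mu_s$ to emit $o_s$ makes the $s$-coordinates trivial, and plugging this into the consistency condition collapses the contraction of $(D,\mathcal F)$ onto that of $(D',\mathcal F')$ with the result $a_s$ discarded. Your reformulation of Eq.~\eqref{eq:consistent} as unique solvability of the fixed-point equations, together with the explicit solution bijection, is simply a more detailed unpacking of the paper's ``partially evaluated $f$'' argument, not a different method.
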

\begin{proof}
  The graph~$D'$ is a~causal structure of the reduced causal model because the reduction over~$s$ cannot increase the functional dependencies of the model parameters.
  The reduced causal model remains consistent for the following reason.
  Since~$s$ is parentless, the model parameter~$\omega_s$ is a~constant~$c$.
  Moreover, by the specification of~$\mu_s\in\mathcal M_s$ and~$x_s\in\mathcal X_s$, the output in the space~$\mathcal O_s$ is a~constant~$d$.
  The proof is concluded by plugging in these constants in the consistency condition (Eq.~\eqref{eq:consistent}):
  For any choice of interventions for the remaining parties the contraction with the reduced model parameters co\"{i}ncides with the function~$f'$ defined as the partially evaluated~$f$ where the result~$\mathcal A_s$ is discarded.
\end{proof}
The second result, as highlighted in the introduction, shows that the causal structure of faithful and consistent causal models are severely restrained.
\begin{theorem}[Admissibility~\cite{admissibility}\label{thm:admissibility}]
  If~$(D,\mathcal F)$ is a~faithful and consistent causal model,
  then~$D$ is a~siblings-on-cycles graph.
\end{theorem}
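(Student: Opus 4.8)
The plan is to prove the contrapositive in the following form: assuming faithfulness, any \emph{sibling-free} directed cycle---one on which no two distinct vertices share a common parent---forces a violation of consistency. Since a graph fails to be siblings-on-cycles exactly when it contains such a cycle, this yields the claim. The guiding intuition is that a sibling-free cycle can be \emph{decoupled} from the rest of the graph and then driven into a logical paradox of the ``grandfather'' type, which is precisely the obstruction that consistency forbids.

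First I would fix the cycle $C=(v_0\arc v_1\arc\dots\arc v_{\ell-1}\arc v_0)$ (indices mod $\ell$) and extract, from faithfulness, a nonconstant signaling function along each cycle edge. For every edge $v_i\arc v_{i+1}$, Eq.~\eqref{eq:faithful} supplies a background $\hat o^{(i+1)}\in\mathcal O_{\Pa_D(v_{i+1})\setminus\{v_i\}}$ together with two values $q_i,r_i\in\mathcal O_{v_i}$ such that $\omega_{v_{i+1}}(q_i,\hat o^{(i+1)})\neq\omega_{v_{i+1}}(r_i,\hat o^{(i+1)})$; writing $f_{i+1}(o_{v_i}):=\omega_{v_{i+1}}(o_{v_i},\hat o^{(i+1)})$ gives a map that separates the two distinguished outputs $q_i$ and $r_i$ of $v_i$.

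Next I would use sibling-freeness to realize all the backgrounds \emph{simultaneously}. Here two facts are needed, both consequences of the absence of a shared parent on $C$: the cycle is chordless (a chord $v_k\arc v_{i+1}$ with $k\neq i$ would make $v_{i+1}$ and $v_{k+1}$ siblings), so the only cycle-vertex parent of $v_{i+1}$ is $v_i$ and hence $\Pa_D(v_{i+1})\setminus\{v_i\}$ consists of \emph{non-cycle} vertices; and these external-parent sets are pairwise disjoint across $C$, since a vertex lying in two of them would be a common parent of two distinct cycle vertices. Consequently no non-cycle vertex is assigned conflicting values by the family $\{\hat o^{(i+1)}\}_i$, and I can pin every non-cycle vertex to its prescribed constant by choosing its intervention to ignore its input and emit that constant. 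This kills every feedback path that leaves the cycle and leaves the cycle edges as the sole live dependencies.

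Finally I would close the loop into a contradiction. On the decoupled cycle each vertex $v_{i+1}$ receives $\omega_{v_{i+1}}(o_{v_i},\hat o^{(i+1)})=f_{i+1}(o_{v_i})$, so restricting each $\mathcal O_{v_i}$ to the two-element set $\{q_i,r_i\}$ turns the cycle into a chain of bits linked by the separating maps $f_{i+1}$. I then choose the cycle interventions $\mu_{v_{i+1}}$ to route $f_{i+1}(q_i)\mapsto q_{i+1}$ and $f_{i+1}(r_i)\mapsto r_{i+1}$ at every vertex except one, where I swap the two outputs, so that the net effect around $C$ is a logical negation. The consistency condition Eq.~\eqref{eq:consistent} now demands an output assignment $\underline o$ with $\underline\omega(\underline o)=\underline i$ and $\underline\mu(\underline x,\underline i)=(\underline a,\underline o)$, but on the isolated bit this reads $b=\neg b$, which has no solution; the left-hand side of Eq.~\eqref{eq:consistent} then vanishes for all $\underline a$ and cannot equal a normalized $p^g(\,\cdot\,|\underline x)$, contradicting consistency. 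I expect the main obstacle to be the bookkeeping of the decoupling step---verifying rigorously that constant-output interventions on all non-cycle vertices remove every indirect dependence among cycle vertices and that the residual fixed-point equation genuinely reduces to the single-bit negation---rather than the paradox itself, which is routine once the cycle stands alone.
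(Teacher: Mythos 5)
Note first that this paper never proves Theorem~\ref{thm:admissibility}: the result is imported, together with its proof, from Ref.~\cite{admissibility}, so there is no in-paper argument to compare against. Your proposal is correct and follows essentially the same route as that reference: a sibling-free cycle is necessarily chordless and its vertices have pairwise-disjoint sets of external parents, so the faithfulness witnesses of Eq.~\eqref{eq:faithful} can all be activated simultaneously by constant interventions outside the cycle, and a negation (``grandfather'') choice of interventions around the decoupled cycle makes the left-hand side of Eq.~\eqref{eq:consistent} equal to zero for every~$\underline{a}$, contradicting the normalization of~$p^g$. The one point you should state explicitly is that each cycle intervention $\mu_{v_{i+1}}$ must send \emph{every} element of $\mathcal I_{v_{i+1}}$ into $\{q_{i+1},r_{i+1}\}$, not merely the two designated inputs $f_{i+1}(q_i)$ and $f_{i+1}(r_i)$; otherwise a spurious fixed point with outputs outside the sets $\{q_i,r_i\}$ could survive, and the solution count need not vanish. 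Your phrase ``restricting each $\mathcal O_{v_i}$ to the two-element set'' presumably intends exactly this, and with that stipulation the no-fixed-point contradiction is airtight.
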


\section{Flow graph}
For the causal model~$(D,\mathcal F)$, the {\em flow graph~$\mathscr F$\/} contains~$D$ as its root (see Fig.~\ref{fig:flowgraph}, and the Appendix for more complex examples), and shows all possible ways in which the causal structure may evolve.
Suppose~$D$ contains a~source vertex~$s$.
This means that the observations made by agent~$s$ cannot be influenced by any other agent; they are independent of the other agents' choice of intervention.
Thus, in order to unravel the causal structure, we may safely implement the intervention of agent~$s$.
This intervention may have an~effect on the causal connections among the remaining agents: different interventions may result in different causal connections.
The flow~$\mathscr F$ we construct captures all possibilities that may arise.
Thus,~$\mathscr F$ branches to all distinct causal structures that may follow after an~intervention at~$s$.
Now, the resulting causal structures may again contain source vertices.
The flow~$\mathscr F$ would then iteratively branch out to causal structures connecting fewer agents.
If there are multiple {\em source\/} vertices, then the flow also branches to the causal structures after intervening at any other source.
Note that branches within the flow might merge (the flow is a rooted directed graph and not necessarily a tree):
The evolution of the causal structure along different choices of interventions may result in the same causal structure.

We give a~construction of the flow by utilizing the model parameters of the causal model, and a~construction of the superflow by ignoring the model parameters.
Note that a superflow may contain paths unrealizable with actual interventions of the agents:
There might exist causal models with the same initial causal structure, but where different parts of the flow are unreachable.

\subsection{Construction: Flow of causal structure}
The algorithm (see Algorithm~\ref{alg:flow}) consists of two parts:
We generate a~rooted directed graph~$\mathscr G$ with causal models as its vertices (lines~1--10), and then we remove all model parameters to arrive at the flow graph~$\mathscr F$ (line 11).
After initiating~$\mathscr G$, we invoke the function~\mbox{\textsc{nontrivialLeavesWithSource}}, which returns all leaves of~$\mathscr G$ whose causal structure is nontrivial and contains at least one source.
Then, we iterate over all leaves, sources vertices therein, and interventions for that source (lines 3--6).
Since~$s$ is a source, the intervention is but the specification of the value on its output space.
We then compute the model parameters of the reduced process (see Eq.~\eqref{eq:reducedmodparam}), and derive the causal structure by determining the functional dependencies.
On line~9, we update~$\mathscr G$ with an edge from that leaf to the reduced one.
We repeat the above as long as~$\mathscr G$ contains nontrivial leaves with source vertices.
After this process terminates, we generate the flow~$\mathscr F$ by discarding the model parameters in all causal models within~$\mathscr G$.
The function~\textsc{removeModelParameters} maps the node-set~$\{(D_i,\mathcal F_i)\}_{i}$ of~$\mathscr G$ to the node-set~$\{D_i\}_i$, and each edge~$(D_i,\mathcal F_i)\arc(D_j,\mathcal F_j)$ to~$D_i\arc D_j$.
\begin{algorithm}[H]
  \caption{Flow (model-parameter aware)}\label{alg:flow}
  \begin{algorithmic}[1]
    \Require causal model $(D,\mathcal F)$
    \Ensure flow $\mathscr F$
    \BeginBox[fill,color=black!5!white]
    \State Initiate~$\mathscr G$ with root~$(D,\mathcal F)$
    \State $\mathcal L \gets \Call{nontrivialLeavesWithSource}{\mathscr G}$
    \While{$\mathcal L \neq \emptyset$}
      \For{$(L,\mathcal F_L)\in\mathcal L$}
        \For{$s\in\Call{sourceVertices}{L}$}
          \For{$o_s\in\mathcal O_s$}
            \State $\mathcal F_L' \gets \Call{reduceModParam}{\mathcal F_L,L,s,o_s}$
            \State $L' \gets \Call{deriveCausalStructure}{\mathcal F_L'}$
            \State $\mathscr G \gets \mathscr G \cup \{(L,\mathcal F_L)\arc(L',\mathcal F_L')\}$
          \EndFor
        \EndFor
      \EndFor
      \State $\mathcal L \gets \Call{nontrivialLeavesWithSource}{\mathscr G}$
    \EndWhile
    \EndBox
    \BeginBox[fill,color=black!5!white]
    \State $\mathscr F \gets \Call{removeModelParameters}{\mathscr G}$
    \EndBox
    \State \Return $\mathscr F$
  \end{algorithmic}
\end{algorithm}

\subsubsection{Correctness}
The algorithm terminates after finitely many steps because
the initial causal model has a~finite number of agents, and the reduction~(lines 7--8) removes one agent at-a-time.
{\em Consistency\/} follows from Theorem~\ref{thm:reduction}, and {\em faithfulness\/} by construction (line 8).
Finally, the iteration on line~6 covers all interventions of agent~$s$.
With this, the flow~$\mathscr F$ describes all possible paths in which the causal structure may evolve.

\subsubsection{Example}
Take the causal structure shown in the root of Fig.~\ref{fig:flowgraph}, and amend it with the model parameters
\begin{align}
  \omega_A(o_P,o_B):=o_Po_B
  \,,
  \quad
  \omega_B(o_P,o_A):=(1-o_P)o_A
  \label{eq:examplemodparam}
  \,,
\end{align}
where all input and all output spaces, but~$\mathcal I_P$, are~$\{0,1\}$, and where the input space to~$P$ is trivial.
The resulting flow~$\mathscr F$ is shown in Fig.~\ref{fig:flowgraph}.

\subsection{Construction: Superflow of causal structure}
For this construction (see Algorithm~\ref{alg:agnostic}), we make use of Theorem~\ref{thm:reduction} in conjunction with Theorem~\ref{thm:admissibility}:
Whenever we remove a~source from a~consistent causal model, then the resulting causal structure must be a~siblings-on-cycles graph.
The algorithm is as follows.
After initiating the graph~$\mathscr S$ with the causal structure~$D$, we iterate over all leaves that contain source vertices, and over each source (lines 4--5).
Then we define~$L'$ by removing the source~$s$ from the graph~$L$, i.e., we remove~$s$ from the set of vertices, and all edges starting from or ending in~$s$ (see also Eq.~\eqref{eq:subgraph}).
After that, we collect all edges of~$L'$ that point to any child of~$s$.
Then we iterate over all possible subsets, and remove these edges from~$L'$.
If the resulting graph is a~siblings-on-cycles graph, as decided by the function \textsc{isSOC}, we add it to the superflow~$\mathscr S$.
\begin{algorithm}[H]
  \caption{Superflow (model-parameter agnostic)}\label{alg:agnostic}
  \begin{algorithmic}[1]
    \Require causal structure $D$
    \Ensure superflow $\mathscr S$
    \State Initiate~$\mathscr S$ with root~$D$
    \State $\mathcal L \gets \Call{nontrivialLeavesWithSource}{\mathscr S}$
    \While{$\mathcal L \neq \emptyset$}
      \For{$L\in\mathcal L$}
        \For{$s\in\Call{sourceVertices}{L}$}
          \State $L' \gets L \setminus \{s\}$
          \State $\mathcal E \gets \{v\arc k \mid k\in\Ch_L(s),v\in\Pa_{L'}(k)\}$
          \For{$\mathcal R \subseteq \mathcal E$}
            \State $L'' \gets L'\setminus \mathcal R$
            \If{\Call{isSOC}{$L''$}}
              \State $\mathscr S \gets \mathscr S \cup \{L\arc L''\}$
            \EndIf
          \EndFor
        \EndFor
      \EndFor
      \State $\mathcal L \gets \Call{nontrivialLeavesWithSource}{\mathscr S}$
    \EndWhile
    \State \Return $\mathscr S$
  \end{algorithmic}
\end{algorithm}

\subsubsection{Correctness}
For any~$L$ on line~4 we will reach line~11 {\em at least once.}
This together with the finiteness of~$D$ implies that Algorithm~\ref{alg:agnostic} terminates after finitely many steps.
For a~causal model~$(D,\mathcal F)$, let~$\mathscr F$ be the flow computed using Algorithm~\ref{alg:flow}, and~$\mathscr S$ the superflow using Algorithm~\ref{alg:agnostic}.
Now, any path~$\pi=D\arc{D_2}\arc\dots\arc{D_\ell}$ in flow~$\mathscr F$ is also present in~$\mathscr S$.
The reason for this is that any reduction over a~source~$s$ may alter the model parameters of the agents~$\Ch_L(s)$ only (see line~7 of Algorithm~\ref{alg:flow} and Eq.~\eqref{eq:reducedmodparam});
and Algorithm~\ref{alg:agnostic} exhausts all possibilities.

\subsection{Runtime}
For simplicity, suppose that the maximum cardinality of the output spaces,~$\max |\mathcal O_v|$, is a constant (this is only relevant for the runtime analysis of Algorithm~\ref{alg:flow}).
Both algorithms presented have an exponential runtime in the number of vertices~$n$ of the causal structure in the root.
In particular, the runtime is
\begin{equation}
  O(2^{\text{poly}(n)})
  \,,
  \label{eq:runtime}
\end{equation}
where~$\text{poly}(n)$ is some polynomial in~$n$.
To see this, first note that a flow could coincide with the superflow.
The number of potential nodes~$N$ in the (super)flow is exponential in~$n$.
So, in Algorithm~\ref{alg:flow} we repeat the lines~5--9 at most~$N$ times, and in Algorithm~\ref{alg:agnostic}, we repeat the lines~8--11 at most~$N$ times.
In both algorithms, each of these repetitions can be done in at most an exponential number of steps.

We give more details on showing that~$N$ is exponential in~$n$.
Towards that, let~$L$ be a node in the (super)flow, and assume that the digraph~$L$ has a single source vertex~$s$.
The number of outgoing edges from~$s$ is~$O(2^{n^2})$.
The reason for this is that for each parent vertex of~$s$, we consider all combinations of possibly ingoing edges (lines 7--8 in Algorithm~\ref{alg:agnostic}).
For each parent of~$s$, there are~$O(2^n)$ such possibilities, and there are~$O(n)$ parents of~$s$.
In general,~$L$ has~$O(n)$ source vertices, so we get another exponent of~$n$, and we remain in the form of Eq.~\eqref{eq:runtime}.

\section{Correlations}
The correlations among the agents~$\mathcal S$ are expressed by the conditional probabilities~$p(a_{\mathcal S}|x_{\mathcal S})$ of observing~$a_{\mathcal S}$ under the settings~$x_{\mathcal S}$.
Such correlations are termed {\em causal\/}~\cite{ocb,og,multi2} if and only if they decompose as
\begin{equation}
  p(a_{\mathcal S}|x_{\mathcal S})
  =
  \sum_{v\in\mathcal S}
  p_{\mathcal S}(v)
  p_v(a_v|x_v)
  p^{x_v}_{a_v}(a_{\mathcal S\setminus\{v\}}|x_{\mathcal S\setminus\{v\}})
  \,,
\end{equation}
where~$p_{\mathcal S}$ is a~probability distribution over~$\mathcal S$,~$p_v$ a~conditional probability distribution of agent~$v$ observing~$a_v$ under the setting~$x_v$,
and where~$p^{x_v}_{a_v}(a_{\mathcal S\setminus\{v\}}|x_{\mathcal S\setminus\{v\}})$ are causal correlations again.
Causal correlations are compatible with a dynamical specification of the causal order, where each agent may influence the order of the agents within its future.
Single-agent correlations are trivially causal.
For a~causal model~$(D,\mathcal F)$ and fixed interventions~$\underline\mu\in\underline{\mathcal M}$, the obtained correlations are~$p^g(\underline a|\underline x)$, where~$g$ is the respective function in Eq.~\eqref{eq:consistent}.
As an application of the developed tool, we present the following result.
\begin{theorem}[Causal correlations\label{thm:cc}]
  If all leaves of a~flow~$\mathscr F$ are trivial,
  then the correlations attainable from a~causal model~$(D,\mathcal F)$ with flow~$\mathscr F$ are causal.
\end{theorem}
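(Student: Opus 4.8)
The plan is to induct on the number of agents $n=|\mathcal V(D)|$, exploiting the fact that the causal decomposition and the flow construction share the same recursive ``peel off a source'' structure. The base case $n=1$ is immediate, since single-agent correlations are trivially causal. For the inductive step, suppose $n>1$ and every leaf of $\mathscr F$ is trivial. First I would observe that $D$ must contain a source vertex: if $\Pa_D(v)\neq\emptyset$ for all $v$, then \textsc{nontrivialLeavesWithSource} never branches on the root, so $D$ itself is a nontrivial leaf of $\mathscr F$, contradicting the hypothesis. Fix one such source $s$; it will play the role of the ``first'' agent, i.e.\ I will take $p_{\mathcal S}(s)=1$ in the decomposition.

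The key structural observation is that a source decouples from the rest. Since $\Pa_D(s)=\emptyset$, the model parameter $\omega_s$ is a constant $c\in\mathcal I_s$ (exactly as used in the proof of Theorem~\ref{thm:reduction}), so for any intervention $\mu_s$ the pair $(a_s,o_s)=\mu_s(x_s,c)$ is a deterministic function of the setting $x_s$ alone. In particular $a_s$ is independent of every other agent's setting and outcome, which supplies the single-agent factor $p_s(a_s|x_s)$. Fixing the induced output $o_s=o_s(x_s)$, Theorem~\ref{thm:reduction} produces a consistent reduced model $(D',\mathcal F')$ on the $n-1$ agents $\mathcal V(D)\setminus\{s\}$ that governs the remaining parties given $s$'s output.

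Next I would transfer the leaf hypothesis to $(D',\mathcal F')$. By construction, Algorithm~\ref{alg:flow} branches from $D$ over $s$ for every value $o_s\in\mathcal O_s$, and the subgraph of $\mathscr F$ rooted at the resulting node $D'$ is precisely the flow of $(D',\mathcal F')$. Every leaf reachable from $D'$ is a leaf of $\mathscr F$, hence trivial, so the flow of $(D',\mathcal F')$ again has only trivial leaves. The induction hypothesis then applies: for the interventions $\{\mu_v\}_{v\neq s}$ the correlation produced by $(D',\mathcal F')$ is causal, and since this holds for each $x_s$ (equivalently, each induced $o_s$) it furnishes the residual factor $p^{x_s}_{a_s}(a_{\mathcal S\setminus\{s\}}|x_{\mathcal S\setminus\{s\}})$.

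Finally I would assemble the decomposition. For a fixed intervention profile the consistency condition (Eq.~\eqref{eq:consistent}) yields a deterministic $g$, and substituting the constant input $c$ and the $x_s$-determined output $o_s(x_s)$ into the sum in Eq.~\eqref{eq:consistent} should split $p^g$ as the product of $s$'s single-agent contribution and the reduced model's correlation—precisely the causal form with $p_{\mathcal S}(s)=1$. I expect the main obstacle to be exactly this bookkeeping step: one must verify that fixing $s$'s input and output genuinely factorizes the contraction, so that the marginal over the remaining agents coincides with the contraction defining $(D',\mathcal F')$ rather than merely resembling it. Establishing that equality cleanly—not the existence of a source nor the descent of the leaf hypothesis, both of which are structural—is where the real work lies.
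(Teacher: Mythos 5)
Your proposal is correct and takes essentially the same approach as the paper: the paper's own (two-sentence) proof likewise observes that, by the trivial-leaves hypothesis, a source can be contracted at every stage, and that the correlations $p^g(\underline a|\underline x)$ decompose causally in that contraction order; your induction on $n$ is just the unrolled version of this. The ``bookkeeping'' step you flag as the real remaining work is in fact already supplied by the proof of Theorem~\ref{thm:reduction}, which establishes that the contraction of the reduced model co\"{i}ncides with the partially evaluated original function with $\mathcal A_s$ discarded.
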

\begin{proof}
  For any choice of interventions of the agents, there exists an~ordering in which to perform the contractions, such that in every iteration a {\em source\/} is contracted.
  The correlations~$p^g(\underline a|\underline x)$ decompose in that order.
\end{proof}

This theorem trivially also holds for {\em superflows.}
Using this theorem we can identify causal structures (digraphs) that, when extended to a causal model, always produce causal correlations.
So, this theorem strengthens the result~\cite{admissibility} that the correlations from a~causal model with {\em chordless\/} cycles only in its causal structure must be causal.
Examples of causal structures with chordal cycles, but that---according to Theorem~\ref{thm:cc}---still produce causal correlations only, are given in Fig.~\ref{fig:examplegraph}.
\begin{figure}
  \def\dist{0.7}
  \centering
  \subfloat[\label{subfig:exn4}]{%
    \begin{tikzpicture}
      \node[vertex] (P) at (0,-\dist) {};
      \node[vertex] (A) at (-\dist,0) {};
      \node[vertex] (B) at (+\dist,0) {};
      \node[vertex] (F) at (0,+\dist) {};
      \draw[-stealth] (P) -- (A);
      \draw[-stealth] (P) -- (B);
      \draw[-stealth] (A) to[out=20,in=180-20] (B);
      \draw[-stealth] (B) to[out=180+20,in=360-20] (A);
      \draw[-stealth] (A) -- (F);
      \draw[-stealth] (F) -- (B);
    \end{tikzpicture}
  }
  \hfill
  \subfloat[\label{subfig:exn5a}]{%
    \begin{tikzpicture}
      \node[vertex] (A) at (0,-\dist) {};
      \node[vertex] (B) at (-\dist,-0.5*\dist) {};
      \node[vertex] (C) at (0,0) {};
      \node[vertex] (D) at (-\dist,0.5*\dist) {};
      \node[vertex] (E) at (0,\dist) {};
      \draw[-stealth] (A) -- (B);
      \draw[-stealth] (B) -- (D);
      \draw[-stealth] (D) -- (E);
      \draw[-stealth] (E) -- (C);
      \draw[-stealth] (C) -- (B);
      \draw[-stealth] (A) -- (C);
      \draw[-stealth] (D) -- (C);
    \end{tikzpicture}
  }
  \hfill
  \subfloat[\label{subfig:exn5b}]{%
    \begin{tikzpicture}
      \def\dist{0.5}
      \node[vertex] (A) at (0,-\dist) {};
      \node[vertex] (B) at (-\dist,0) {};
      \node[vertex] (C) at (\dist,0) {};
      \node[vertex] (D) at (-2*\dist,\dist) {};
      \node[vertex] (E) at (2*\dist,\dist) {};
      \draw[-stealth] (A) -- (B);
      \draw[-stealth] (A) -- (C);
      \draw[-stealth] (B) -- (C);
      \draw[-stealth] (B) -- (D);
      \draw[-stealth] (D) -- (C);
      \draw[-stealth] (C) -- (E);
      \draw[-stealth] (E) -- (B);
    \end{tikzpicture}
  }
  \hfill
  \subfloat[\label{subfig:exn6}]{%
    \begin{tikzpicture}
      \def\dist{0.5}
      \node[vertex] (A) at (0,-\dist) {};
      \node[vertex] (B) at (-\dist,0) {};
      \node[vertex] (C) at (\dist,0) {};
      \node[vertex] (D) at (-2*\dist,\dist) {};
      \node[vertex] (E) at (2*\dist,\dist) {};
      \node[vertex] (F) at (0,\dist) {};
      \draw[-stealth] (A) -- (B);
      \draw[-stealth] (A) -- (C);
      \draw[-stealth] (B) -- (C);
      \draw[-stealth] (C) -- (E);
      \draw[-stealth] (C) -- (F);
      \draw[-stealth] (D) -- (B);
      \draw[-stealth] (E) -- (F);
      \draw[-stealth] (F) -- (B);
      \draw[-stealth] (F) -- (D);
    \end{tikzpicture}
  }
  \caption{%
    These causal structures give causal correlations only.
  }
  \label{fig:examplegraph}
\end{figure}
In Fig.~\ref{fig:superflow4b} (Appendix), we give the superflow of the causal structure shown in Fig.~\ref{subfig:exn4}.
In the Appendix we also list all connected four-vertex digraphs in the ``gap'' between the result of Ref.~\cite{admissibility} and Theorem~\ref{thm:cc}.

\section{Quantum generalization}
In the quantum case~\cite{qcm}, the input and output space of the agents are Hilbert spaces.
Also, the model parameters are a~family of completely positive trace-preserving maps as opposed to functions, that, in the Choi representation~\cite{choi}, pairwise commute.
This ensures that the quantum process~$W:=\prod_{v\in\mathcal V(D)}\omega_{v}$ is well-defined.
Finally, faithfulness (see Eq.~\eqref{eq:faithful}) is defined analogously, and consistency (see Eq.~\eqref{eq:consistent}) is
\begin{equation}
  \forall \{\mu_v\in\cptp_v\}_{v\in\mathcal V(D)}:
  \Tr\left[
    W
    \cdot
    \bigotimes_{v\in\mathcal V(D)}
    \mu_v
  \right]
  =
  1
  \,,
\end{equation}
where~$\cptp_v$ is the set of completely positive trace-preserving maps from~$\mathcal L(\mathcal I_v)$ (the set of linear operators on~$\mathcal I_v$) to~$\mathcal L(\mathcal O_v)$ in the Choi representation.

Take a~quantum causal model~$(D,\mathcal F)$ where agent~$s$ is a~source, and suppose the input Hilbert space~$\mathcal I_s$ is trivial.
If agent~$s$ performs as intervention the state preparation~$\mu_s$, then it is unclear how to define the {\em reduced\/} model parameters, and whether they may be defined in a~meaningful way.
The reduced process~$W'$ is given by~$\Tr_{\mathcal O_s}[W\mu_s]$.
Generally, this partial trace does not extend to~$\prod_{v\in\mathcal V(D')\setminus\Ch_{D}(s)} \omega_{v} \cdot \prod_{v\in\Ch_{D}(s)} \Tr_{\mathcal O_s}[\omega_v\mu_s]$,
for~$D'$ defined as in Eq.~\eqref{eq:subgraph}.

\section{Conclusion}
We present a tool---the flow of causal structures---for studying the dynamical aspect of causal models.
This tool ``linearizes'' cycles in causal structures, and hence allows for conventional step-by-step analysis of communication schemes, computation, and general information processing under {\em dynamical causal order.}
In fact, the flow of the directed path graph (a directed line) is just the directed path graph again.
We demonstrate the usability of this tool by invoking it in the context of correlations: We show that if all leaves of the flow are trivial, then the possible correlations from the causal model must decompose causally.
Possible further applications of this tool are as follows.
Flows might be beneficial in bounding the computational power of classical and quantum computation without causal order~\cite{tameness,postbqp}.
Also, they might be helpful in resolving the present conjecture~\cite{admissibility} that each siblings-on-cycles graph is part of a~consistent and faithful causal model.
By design, flows allow us to monitor the stream of information in dynamical contexts.
Arguably, classical-deterministic processes with dynamical causal order are in principle implementable within general relativity.
Thus, on a foundations-of-physics front, flows might facilitate an information-based treatment of general relativity,
and the design of relativistic protocols for information processing.
In a related work, Vilasini and Renner~\cite{vvrr} show the limitations of embedding quantum cyclic causal models within a {\em fixed\/} relativistic spacetime.
Here, we take a different approach and study relativistic causal structures which may be {\em dynamic,} as guided by the principles of general relativity.
We leave open the embedding of quantum cyclic causal models in such {\em dynamic\/} spacetimes.

\noindent
{\bf Code Availability.}
We have implemented Algorithm~\ref{alg:agnostic} as a \texttt{C} program~\cite{code}.
For the Appendix, we also made use of the ``SOC Observation Code''~\cite{codesoc}.

\noindent
{\bf Acknowledgments.}
We thank Lefteris Tselentis for helpful discussions, and two anonymous referees for their valuable input.
This work is supported by the Swiss National Science Foundation (SNF) through project~214808, and by the Hasler Foundation through project~24010.

\bibliography{references.bib}
\onecolumngrid
\newpage
\appendix
\section*{Superflow examples}
Figure~\ref{fig:n4all} gives a comprehensive list of all connected four-vertex digraphs that are of relevance for Theorem~\ref{thm:cc}.
These digraphs satisfy some properties.
First, they are admissible as causal structures, i.e., they are siblings-on-cycles graphs.
Second, each of them has at least one source vertex.
And third, each of them contains at least one directed cycle with a chord.
As by our algorithm, it turns out that all digraphs but (a), (d), and (g), produce causal correlations only.

\begin{figure}[H]
  \def\dist{0.7}
  \def\nodes{%
    \node[vertex] (P) at (0,-\dist) {};
    \node[vertex] (A) at (-\dist,0) {};
    \node[vertex] (B) at (+\dist,0) {};
    \node[vertex] (F) at (0,+\dist) {};
  }
  \centering
  \subfloat[\label{subfig:n4ga}]{%
    \begin{tikzpicture}
      \nodes
      \draw[-stealth] (P) -- (A);
      \draw[-stealth] (A) to[out=20,in=180-20] (B);
      \draw[-stealth] (B) to[out=180+20,in=360-20] (A);
      \draw[-stealth] (A) to[out=45+20,in=180+45-20] (F);
      \draw[-stealth] (F) to[out=180+45+20,in=360+45-20] (A);
      \draw[-stealth] (F) to[out=-45+20,in=180-45-20] (B);
      \draw[-stealth] (B) to[out=180+-45+20,in=360-45-20] (F);
    \end{tikzpicture}
  }
  \hfill
  \subfloat[\label{subfig:n4gb}]{%
    \begin{tikzpicture}
      \nodes
      \draw[-stealth] (P) -- (A);
      \draw[-stealth] (P) -- (B);
      \draw[-stealth] (A) to[out=20,in=180-20] (B);
      \draw[-stealth] (B) to[out=180+20,in=360-20] (A);
      \draw[-stealth] (A) -- (F);
      \draw[-stealth] (F) -- (B);
    \end{tikzpicture}
  }
  \hfill
  \subfloat[\label{subfig:n4gc}]{%
    \begin{tikzpicture}
      \nodes
      \draw[-stealth] (P) -- (A);
      \draw[-stealth] (P) -- (B);
      \draw[-stealth] (A) to[out=20,in=180-20] (B);
      \draw[-stealth] (B) to[out=180+20,in=360-20] (A);
      \draw[-stealth] (A) -- (F);
      \draw[-stealth] (F) to[out=-45+20,in=180-45-20] (B);
      \draw[-stealth] (B) to[out=180+-45+20,in=360-45-20] (F);
    \end{tikzpicture}
  }
  \hfill
  \subfloat[\label{subfig:n4gd}]{%
    \begin{tikzpicture}
      \nodes
      \draw[-stealth] (P) -- (A);
      \draw[-stealth] (P) -- (B);
      \draw[-stealth] (A) to[out=20,in=180-20] (B);
      \draw[-stealth] (B) to[out=180+20,in=360-20] (A);
      \draw[-stealth] (A) to[out=45+20,in=180+45-20] (F);
      \draw[-stealth] (F) to[out=180+45+20,in=360+45-20] (A);
      \draw[-stealth] (F) to[out=-45+20,in=180-45-20] (B);
      \draw[-stealth] (B) to[out=180+-45+20,in=360-45-20] (F);
    \end{tikzpicture}
  }
  \hfill
  \subfloat[\label{subfig:n4ge}]{%
    \begin{tikzpicture}
      \nodes
      \draw[-stealth] (P) -- (A);
      \draw[-stealth] (P) -- (B);
      \draw[-stealth] (P) -- (F);
      \draw[-stealth] (B) -- (A);
      \draw[-stealth] (A) to[out=45+20,in=180+45-20] (F);
      \draw[-stealth] (F) to[out=180+45+20,in=360+45-20] (A);
      \draw[-stealth] (F) -- (B);
    \end{tikzpicture}
  }
  \hfill
  \subfloat[\label{subfig:n4gf}]{%
    \begin{tikzpicture}
      \nodes
      \draw[-stealth] (P) -- (A);
      \draw[-stealth] (P) -- (B);
      \draw[-stealth] (P) -- (F);
      \draw[-stealth] (B) -- (A);
      \draw[-stealth] (A) to[out=45+20,in=180+45-20] (F);
      \draw[-stealth] (F) to[out=180+45+20,in=360+45-20] (A);
      \draw[-stealth] (F) to[out=-45+20,in=180-45-20] (B);
      \draw[-stealth] (B) to[out=180+-45+20,in=360-45-20] (F);
    \end{tikzpicture}
  }
  \hfill
  \subfloat[\label{subfig:n4gg}]{%
    \begin{tikzpicture}
      \nodes
      \draw[-stealth] (P) -- (A);
      \draw[-stealth] (P) -- (B);
      \draw[-stealth] (P) -- (F);
      \draw[-stealth] (A) to[out=20,in=180-20] (B);
      \draw[-stealth] (B) to[out=180+20,in=360-20] (A);
      \draw[-stealth] (A) to[out=45+20,in=180+45-20] (F);
      \draw[-stealth] (F) to[out=180+45+20,in=360+45-20] (A);
      \draw[-stealth] (F) to[out=-45+20,in=180-45-20] (B);
      \draw[-stealth] (B) to[out=180+-45+20,in=360-45-20] (F);
    \end{tikzpicture}
  }
  \caption{%
    All connected four-node cyclic digraphs with at lest one source and at least one cycle with a chord.
  }
  \label{fig:n4all}
\end{figure}

In Figs.~\ref{fig:superflow4b} and~\ref{fig:exampleNEW}, we present the superflows of the causal structures given in Figs.~\ref{subfig:n4gb} and~\ref{subfig:n4gc}.
In both cases, the superflow has trivial leaves.
By Theorem~\ref{thm:cc}, the correlations produced by any causal model with such a causal structure are restricted to decompose causally.
In Fig.~\ref{fig:exampleNEW2}, we present the superflow of the causal structure of Fig.~\ref{subfig:n4gg}.
Here, we have a non-trivial leaf:
There might exist model parameters such that the correlations produced by such a causal structure do not decompose causally.

\newcommand{\sfpad}{0.25}
\newcommand{\sfdeg}{10}
\begin{figure}[H]
  \centering
  \begin{tikzpicture}
    \node (P) at (0,0) {$\mathbf P$};
    \node (A) at (-1,1) {$\mathbf A$};
    \node (B) at (+1,1) {$\mathbf B$};
    \node (C) at (0,2) {$\mathbf C$};
    \draw[-stealth] (P) -- (A);
    \draw[-stealth] (A) -- (C);
    \draw[-stealth] (P) -- (B);
    \draw[-stealth] (C) -- (B);
    \draw[-stealth] (A) to[out=\sfdeg,in=180-\sfdeg] (B);
    \draw[-stealth] (B) to[out=180+\sfdeg,in=-\sfdeg] (A);
    \draw (-1-\sfpad,0-\sfpad) rectangle (1+\sfpad,2+\sfpad);
    \def\y{-3}
    \foreach \xoffset/\l in {-6/LL,-3/L,0/C,3/R,6/RR} {
      \node (A\l) at (\xoffset-1,\y) {$\mathbf A$};
      \node (B\l) at (\xoffset+1,\y) {$\mathbf B$};
      \node (C\l) at (\xoffset+0,\y+1) {$\mathbf C$};
      \draw[dashed] (\xoffset-1-\sfpad,\y-\sfpad) rectangle (\xoffset+1+\sfpad,\y+1+\sfpad);
      \draw[dashed,-latex,thick] (0,0-\sfpad) -- (\xoffset,\y+1+\sfpad);
    }
    \draw[-stealth] (BLL) -- (ALL);
    \draw[-stealth] (ALL) -- (CLL);
    \draw[-stealth] (AL) -- (CL);
    \draw[-stealth] (AC) -- (BC);
    \draw[-stealth] (AC) -- (CC);
    \draw[-stealth] (AR) -- (CR);
    \draw[-stealth] (CR) -- (BR);
    \draw[-stealth] (AR) -- (BR);
    \draw[-stealth] (ARR) -- (CRR);
    \draw[-stealth] (CRR) -- (BRR);
    \def\y{-5}
    \foreach \xoffset/\l/\A/\B in {-2/NL/A/C,0/N/C/B,2/NR/C/B} {
      \node (\A\l) at (\xoffset-0.5,\y) {$\mathbf \A$};
      \node (\B\l) at (\xoffset+0.5,\y) {$\mathbf \B$};
      \draw[dashed] (\xoffset-0.5-\sfpad,\y-\sfpad) rectangle (\xoffset+0.5+\sfpad,\y+\sfpad);
    }
    \draw[-stealth] (ANL) -- (CNL);
    \draw[dashed,-latex,thick] ($ (ALL) + (1,-\sfpad) $) -- ($ (ANL) + (0.5,\sfpad) $);
    \draw[dashed,-latex,thick] ($ (AL) + (1,-\sfpad) $) -- ($ (ANL) + (0.5,\sfpad) $);
    \draw[dashed,-latex,thick] ($ (AL) + (1,-\sfpad) $) -- ($ (CN) + (0.5,\sfpad) $);
    \draw[dashed,-latex,thick] ($ (AC) + (1,-\sfpad) $) -- ($ (CN) + (0.5,\sfpad) $);
    \draw[dashed,-latex,thick] ($ (AR) + (1,-\sfpad) $) -- ($ (CN) + (0.5,\sfpad) $);
    \draw[-stealth] (CNR) -- (BNR);
    \draw[dashed,-latex,thick] ($ (AR) + (1,-\sfpad) $) -- ($ (CNR) + (0.5,\sfpad) $);
    \draw[dashed,-latex,thick] ($ (ARR) + (1,-\sfpad) $) -- ($ (CNR) + (0.5,\sfpad) $);
    \def\y{-7}
    \foreach \xoffset/\l/\A in {-0.75/BotL/C,0.75/BotR/B} {
      \node (\A\l) at (\xoffset,\y) {$\mathbf \A$};
      \draw (\xoffset-\sfpad,\y-\sfpad) rectangle (\xoffset+\sfpad,\y+\sfpad);
    }
    \draw[dashed,-latex,thick] ($ (ANL) + (0.5,-\sfpad) $) -- ($ (CBotL) + (0,\sfpad) $);
    \draw[dashed,-latex,thick] ($ (CN) + (0.5,-\sfpad) $) -- ($ (CBotL) + (0,\sfpad) $);
    \draw[dashed,-latex,thick] ($ (CN) + (0.5,-\sfpad) $) -- ($ (BBotR) + (0,\sfpad) $);
    \draw[dashed,-latex,thick] ($ (CNR) + (0.5,-\sfpad) $) -- ($ (BBotR) + (0,\sfpad) $);
  \end{tikzpicture}
  \caption{%
    Superflow of the example given in Fig.~\ref{subfig:exn4} (Fig.~\ref{subfig:n4gb} in the Appendix).
  }
  \label{fig:superflow4b}
\end{figure}
\begin{figure}[H]
  \centering
  \begin{tikzpicture}
    \node (P) at (0,0) {$\mathbf P$};
    \node (A) at (-1,1) {$\mathbf A$};
    \node (B) at (+1,1) {$\mathbf B$};
    \node (C) at (0,2) {$\mathbf C$};
    \draw[-stealth] (P) -- (A);
    \draw[-stealth] (A) -- (C);
    \draw[-stealth] (P) -- (B);
    \draw[-stealth] (B) to[out=90+45+\sfdeg,in=180+90+45-\sfdeg] (C);
    \draw[-stealth] (C) to[out=180+90+45+\sfdeg,in=90+45-\sfdeg] (B);
    \draw[-stealth] (A) to[out=\sfdeg,in=180-\sfdeg] (B);
    \draw[-stealth] (B) to[out=180+\sfdeg,in=-\sfdeg] (A);
    \draw (-1-\sfpad,0-\sfpad) rectangle (1+\sfpad,2+\sfpad);
    \def\y{-3}
    \foreach \xoffset/\l in {-4.5/LL,-1.5/L,1.5/R,4.5/RR} {
      \node (A\l) at (\xoffset-1,\y) {$\mathbf A$};
      \node (B\l) at (\xoffset+1,\y) {$\mathbf B$};
      \node (C\l) at (\xoffset+0,\y+1) {$\mathbf C$};
      \draw[dashed] (\xoffset-1-\sfpad,\y-\sfpad) rectangle (\xoffset+1+\sfpad,\y+1+\sfpad);
      \draw[dashed,-latex,thick] (0,0-\sfpad) -- (\xoffset,\y+1+\sfpad);
    }
    \draw[-stealth] (ALL) -- (CLL);
    \draw[-stealth] (BLL) -- (CLL);
    \draw[-stealth] (BLL) -- (ALL);
    \draw[-stealth] (AL) -- (CL);
    \draw[-stealth] (BL) -- (CL);
    \draw[-stealth] (AR) -- (CR);
    \draw[-stealth] (BR) -- (CR);
    \draw[-stealth] (AR) -- (BR);
    \draw[-stealth] (ARR) -- (CRR);
    \draw[-stealth] (ARR) -- (BRR);
    \draw[-stealth] (BRR) to[out=90+45+\sfdeg,in=180+90+45-\sfdeg] (CRR);
    \draw[-stealth] (CRR) to[out=180+90+45+\sfdeg,in=90+45-\sfdeg] (BRR);
    \def\y{-5}
    \foreach \xoffset/\l/\A/\B in {-4/NLL/A/C,-2/NL/A/C,0/N/C/B,2/NR/C/B,4/NRR/C/B} {
      \node (\A\l) at (\xoffset-0.5,\y) {$\mathbf \A$};
      \node (\B\l) at (\xoffset+0.5,\y) {$\mathbf \B$};
      \draw[dashed] (\xoffset-0.5-\sfpad,\y-\sfpad) rectangle (\xoffset+0.5+\sfpad,\y+\sfpad);
    }
    \draw[dashed,-latex,thick] ($ (ALL) + (1,-\sfpad) $) -- ($ (ANLL) + (0.5,\sfpad) $);
    \draw[dashed,-latex,thick] ($ (AL) + (1,-\sfpad) $) -- ($ (ANLL) + (0.5,\sfpad) $);
    \draw[-stealth] (ANL) -- (CNL);
    \draw[dashed,-latex,thick] ($ (ALL) + (1,-\sfpad) $) -- ($ (ANL) + (0.5,\sfpad) $);
    \draw[dashed,-latex,thick] ($ (AL) + (1,-\sfpad) $) -- ($ (ANL) + (0.5,\sfpad) $);
    \draw[-stealth] (BN) -- (CN);
    \draw[dashed,-latex,thick] ($ (AL) + (1,-\sfpad) $) -- ($ (CN) + (0.5,\sfpad) $);
    \draw[dashed,-latex,thick] ($ (AR) + (1,-\sfpad) $) -- ($ (CN) + (0.5,\sfpad) $);
    \draw[dashed,-latex,thick] ($ (ARR) + (1,-\sfpad) $) -- ($ (CN) + (0.5,\sfpad) $);
    \draw[dashed,-latex,thick] ($ (AL) + (1,-\sfpad) $) -- ($ (CNR) + (0.5,\sfpad) $);
    \draw[dashed,-latex,thick] ($ (AR) + (1,-\sfpad) $) -- ($ (CNR) + (0.5,\sfpad) $);
    \draw[dashed,-latex,thick] ($ (ARR) + (1,-\sfpad) $) -- ($ (CNR) + (0.5,\sfpad) $);
    \draw[-stealth] (CNRR) -- (BNRR);
    \draw[dashed,-latex,thick] ($ (ARR) + (1,-\sfpad) $) -- ($ (CNRR) + (0.5,\sfpad) $);
    \def\y{-7}
    \foreach \xoffset/\l/\A in {-1.5/BotL/A,0/BotN/C,1.5/BotR/B} {
      \node (\A\l) at (\xoffset,\y) {$\mathbf \A$};
      \draw (\xoffset-\sfpad,\y-\sfpad) rectangle (\xoffset+\sfpad,\y+\sfpad);
    }
    \draw[dashed,-latex,thick] ($ (ANLL) + (0.5,-\sfpad) $) -- ($ (ABotL) + (0,\sfpad) $);
    \draw[dashed,-latex,thick] ($ (ANLL) + (0.5,-\sfpad) $) -- ($ (CBotN) + (0,\sfpad) $);
    \draw[dashed,-latex,thick] ($ (ANL) + (0.5,-\sfpad) $) -- ($ (CBotN) + (0,\sfpad) $);
    \draw[dashed,-latex,thick] ($ (CN) + (0.5,-\sfpad) $) -- ($ (CBotN) + (0,\sfpad) $);
    \draw[dashed,-latex,thick] ($ (CNR) + (0.5,-\sfpad) $) -- ($ (CBotN) + (0,\sfpad) $);
    \draw[dashed,-latex,thick] ($ (CNR) + (0.5,-\sfpad) $) -- ($ (BBotR) + (0,\sfpad) $);
    \draw[dashed,-latex,thick] ($ (CNRR) + (0.5,-\sfpad) $) -- ($ (BBotR) + (0,\sfpad) $);
  \end{tikzpicture}
  \caption{%
    Superflow of the example given in Fig.~\ref{subfig:n4gc}.
  }
  \label{fig:exampleNEW}
\end{figure}
\vfill
\begin{figure}[H]
  \centering
  \begin{tikzpicture}
    \node (P) at (0,0) {$\mathbf P$};
    \node (A) at (-1,1) {$\mathbf A$};
    \node (B) at (+1,1) {$\mathbf B$};
    \node (C) at (0,2) {$\mathbf C$};
    \draw[-stealth] (P) -- (A);
    \draw[-stealth] (P) -- (B);
    \draw[-stealth] (A) to[out=45+\sfdeg,in=180+45-\sfdeg] (C);
    \draw[-stealth] (C) to[out=180+45+\sfdeg,in=45-\sfdeg] (A);
    \draw[-stealth] (B) to[out=90+45+\sfdeg,in=180+90+45-\sfdeg] (C);
    \draw[-stealth] (C) to[out=180+90+45+\sfdeg,in=90+45-\sfdeg] (B);
    \draw[-stealth] (A) to[out=\sfdeg,in=180-\sfdeg] (B);
    \draw[-stealth] (B) to[out=180+\sfdeg,in=-\sfdeg] (A);
    \draw (-1-\sfpad,0-\sfpad) rectangle (1+\sfpad,2+\sfpad);
    \def\y{-3}
    \foreach \xoffset/\l in {+4.5/5, +1.5/4, -1.5/1, -4.5/2, -7.5/3} {
      \node (A\l) at (\xoffset-1,\y) {$\mathbf A$};
      \node (B\l) at (\xoffset+1,\y) {$\mathbf B$};
      \node (C\l) at (\xoffset+0,\y+1) {$\mathbf C$};
      \draw[dashed] (\xoffset-1-\sfpad,\y-\sfpad) rectangle (\xoffset+1+\sfpad,\y+1+\sfpad);
      \draw[dashed,-latex,thick] (0,0-\sfpad) -- (\xoffset,\y+1+\sfpad);
    }
    \foreach \xoffset/\l in {+7.5/6} {
      \node (A\l) at (\xoffset-1,\y) {$\mathbf A$};
      \node (B\l) at (\xoffset+1,\y) {$\mathbf B$};
      \node (C\l) at (\xoffset+0,\y+1) {$\mathbf C$};
      \draw (\xoffset-1-\sfpad,\y-\sfpad) rectangle (\xoffset+1+\sfpad,\y+1+\sfpad);
      \draw[dashed,-latex,thick] (0,0-\sfpad) -- (\xoffset,\y+1+\sfpad);
    }
    \draw[-stealth] (A6) to[out=45+\sfdeg,in=180+45-\sfdeg] (C6);
    \draw[-stealth] (C6) to[out=180+45+\sfdeg,in=45-\sfdeg] (A6);
    \draw[-stealth] (B6) to[out=90+45+\sfdeg,in=180+90+45-\sfdeg] (C6);
    \draw[-stealth] (C6) to[out=180+90+45+\sfdeg,in=90+45-\sfdeg] (B6);
    \draw[-stealth] (A6) to[out=\sfdeg,in=180-\sfdeg] (B6);
    \draw[-stealth] (B6) to[out=180+\sfdeg,in=-\sfdeg] (A6);
    \draw[-stealth] (A5) -- (C5);
    \draw[-stealth] (A5) -- (B5);
    \draw[-stealth] (B5) to[out=90+45+\sfdeg,in=180+90+45-\sfdeg] (C5);
    \draw[-stealth] (C5) to[out=180+90+45+\sfdeg,in=90+45-\sfdeg] (B5);
    \draw[-stealth] (A4) -- (C4);
    \draw[-stealth] (A4) -- (B4);
    \draw[-stealth] (B4) -- (C4);
    \draw[-stealth] (A1) -- (C1);
    \draw[-stealth] (B1) -- (C1);
    \draw[-stealth] (B2) -- (C2);
    \draw[-stealth] (B2) -- (A2);
    \draw[-stealth] (A2) -- (C2);
    \draw[-stealth] (B3) -- (A3);
    \draw[-stealth] (B3) -- (C3);
    \draw[-stealth] (A3) to[out=45+\sfdeg,in=180+45-\sfdeg] (C3);
    \draw[-stealth] (C3) to[out=180+45+\sfdeg,in=45-\sfdeg] (A3);
    %
    \def\y{-5}
    \foreach \xoffset/\l/\A/\B in {+5/12/C/B, +3/7/C/B, +1/8/C/B, -1/10/A/C, -3/9/A/C, -5/11/A/C} {
      \node (\A\l) at (\xoffset-0.5,\y) {$\mathbf \A$};
      \node (\B\l) at (\xoffset+0.5,\y) {$\mathbf \B$};
      \draw[dashed] (\xoffset-0.5-\sfpad,\y-\sfpad) rectangle (\xoffset+0.5+\sfpad,\y+\sfpad);
    }
    \draw[dashed,-latex,thick] ($ (A5) + (1,-\sfpad) $) -- ($ (C12) + (0.5,\sfpad) $);
    \draw[-stealth] (C12) -- (B12);
    \draw[dashed,-latex,thick] ($ (A1) + (1,-\sfpad) $) -- ($ (C7) + (0.5,\sfpad) $);
    \draw[dashed,-latex,thick] ($ (A4) + (1,-\sfpad) $) -- ($ (C7) + (0.5,\sfpad) $);
    \draw[dashed,-latex,thick] ($ (A5) + (1,-\sfpad) $) -- ($ (C7) + (0.5,\sfpad) $);
    \draw[dashed,-latex,thick] ($ (A1) + (1,-\sfpad) $) -- ($ (C8) + (0.5,\sfpad) $);
    \draw[dashed,-latex,thick] ($ (A4) + (1,-\sfpad) $) -- ($ (C8) + (0.5,\sfpad) $);
    \draw[dashed,-latex,thick] ($ (A5) + (1,-\sfpad) $) -- ($ (C8) + (0.5,\sfpad) $);
    \draw[-stealth] (B8) -- (C8);
    \draw[dashed,-latex,thick] ($ (A1) + (1,-\sfpad) $) -- ($ (A10) + (0.5,\sfpad) $);
    \draw[dashed,-latex,thick] ($ (A2) + (1,-\sfpad) $) -- ($ (A10) + (0.5,\sfpad) $);
    \draw[dashed,-latex,thick] ($ (A3) + (1,-\sfpad) $) -- ($ (A10) + (0.5,\sfpad) $);
    \draw[-stealth] (A10) -- (C10);
    \draw[dashed,-latex,thick] ($ (A1) + (1,-\sfpad) $) -- ($ (A9) + (0.5,\sfpad) $);
    \draw[dashed,-latex,thick] ($ (A2) + (1,-\sfpad) $) -- ($ (A9) + (0.5,\sfpad) $);
    \draw[dashed,-latex,thick] ($ (A3) + (1,-\sfpad) $) -- ($ (A9) + (0.5,\sfpad) $);
    \draw[-stealth] (C11) -- (A11);
    \draw[dashed,-latex,thick] ($ (A3) + (1,-\sfpad) $) -- ($ (A11) + (0.5,\sfpad) $);
    \def\y{-7}
    \foreach \xoffset/\l/\A in {+1.5/13/B,0/14/C,-1.5/15/A} {
      \node (\A\l) at (\xoffset,\y) {$\mathbf \A$};
      \draw (\xoffset-\sfpad,\y-\sfpad) rectangle (\xoffset+\sfpad,\y+\sfpad);
    }
    \draw[dashed,-latex,thick] ($ (C7) + (0.5,-\sfpad) $) -- ($ (B13) + (0,\sfpad) $);
    \draw[dashed,-latex,thick] ($ (C12) + (0.5,-\sfpad) $) -- ($ (B13) + (0,\sfpad) $);
    \draw[dashed,-latex,thick] ($ (C7) + (0.5,-\sfpad) $) -- ($ (C14) + (0,\sfpad) $);
    \draw[dashed,-latex,thick] ($ (C8) + (0.5,-\sfpad) $) -- ($ (C14) + (0,\sfpad) $);
    \draw[dashed,-latex,thick] ($ (A9) + (0.5,-\sfpad) $) -- ($ (C14) + (0,\sfpad) $);
    \draw[dashed,-latex,thick] ($ (A10) + (0.5,-\sfpad) $) -- ($ (C14) + (0,\sfpad) $);
    \draw[dashed,-latex,thick] ($ (A9) + (0.5,-\sfpad) $) -- ($ (A15) + (0,\sfpad) $);
    \draw[dashed,-latex,thick] ($ (A11) + (0.5,-\sfpad) $) -- ($ (A15) + (0,\sfpad) $);
  \end{tikzpicture}
  \caption{%
    Superflow of the example given in Fig.~\ref{subfig:n4gg}.
  }
  \label{fig:exampleNEW2}
\end{figure}
\end{document}